\newtheorem{theorem}{Theorem}
\newtheorem{prop}[theorem]{Proposition}
\newtheorem{cor}[theorem]{Corollary}
\theoremstyle{definition}
\newtheorem{problem}[theorem]{Problem}
\newcommand{\tinyspace}{\mspace{1mu}}
\newcommand{\microspace}{\mspace{0.5mu}}
\newcommand{\op}[1]{\operatorname{#1}}
\newcommand{\norm}[1]{\left\lVert\tinyspace#1\tinyspace\right\rVert}
\newcommand{\abs}[1]{\left\lvert\tinyspace #1 \tinyspace\right\rvert}
\newcommand{\dnorm}[1]{\norm{#1}_{\diamond}}
\newcommand{\tr}{\operatorname{Tr}}
\newcommand{\ip}[2]{\left\langle #1 , #2\right\rangle}
\newcommand{\sip}[2]{\langle #1 , #2\rangle}
\newcommand{\triplenorm}[1]{
  \left|\!\microspace\left|\!\microspace\left| #1 
  \right|\!\microspace\right|\!\microspace\right|}
\newcommand{\fid}{\operatorname{F}}
\newcommand{\setft}[1]{\mathrm{#1}}
\newcommand{\lin}[1]{\setft{L}\left(#1\right)}
\newcommand{\density}[1]{\setft{D}\left(#1\right)}
\newcommand{\unitary}[1]{\setft{U}\left(#1\right)}
\newcommand{\trans}[1]{\setft{T}\left(#1\right)}
\newcommand{\herm}[1]{\setft{Herm}\left(#1\right)}
\newcommand{\pos}[1]{\setft{Pos}\left(#1\right)}
\newcommand{\pd}[1]{\setft{Pd}\left(#1\right)}
\newcommand{\sphere}[1]{\mathcal{S}\!\left(#1\right)}
\newenvironment{mylist}[1]{\begin{list}{}{
	\setlength{\leftmargin}{#1}
	\setlength{\rightmargin}{0mm}
	\setlength{\labelsep}{2mm}
	\setlength{\labelwidth}{8mm}
	\setlength{\itemsep}{0mm}}}
	{\end{list}}
\def\complex{\mathbb{C}}
\def\({\left(}
\def\){\right)}
\def\I{\mathbb{1}}
\def\X{\mathcal{X}}
\def\Y{\mathcal{Y}}
\def\Z{\mathcal{Z}}
\def\W{\mathcal{W}}
\def\A{\mathcal{A}}
\def\B{\mathcal{B}}
\def\U{\mathcal{U}}
\def\R{\mathcal{R}}
\def\Q{\mathcal{Q}}
\begin{document}

\title{\bf Semidefinite programs for completely bounded norms}

\author{John Watrous\\[1mm]
  {\it\small Institute for Quantum Computing and School of Computer
    Science}\\ 
  {\it \small University of Waterloo, Waterloo, Ontario, Canada}
}

\date{April 15, 2009}

\maketitle

\begin{abstract}
  The completely bounded trace and spectral norms in finite dimensions
  are shown to be expressible by semidefinite programs.
  This provides an efficient method by which these norms may be both
  calculated and verified, and gives alternate proofs of some known
  facts about them.
\end{abstract}

\section{Introduction}

Linear mappings from one space of operators to another, which are
often called {\it super-operators}, play an important role in quantum
information theory.
Quantum channels in particular, which model general discrete-time
changes in quantum systems, are represented by super-operators that
act on operators on finite-dimensional complex vector spaces.

It is natural to consider {\it distances} between quantum channels, so
as to quantify the similarity with which they act on quantum states.
One way to define such a notion is to define a suitable {\it norm}
on the space of super-operators in which channels of a given size are
represented.
Then, the distance between two channels is defined as the norm of
their difference.
A natural question that arises is: {\it what norms give rise to the
most physically meaningful notions of distance?}
As is argued in \cite{GilchristLN05}, the answer to this question may
depend on the problem at hand---but perhaps the most natural and
widely applicable choice within quantum information theory is the
{\it completely bounded trace norm}, also known as the
{\it diamond norm}.
This norm was first used in the setting of quantum information by
Kitaev \cite{Kitaev97}, who used it mainly as a tool in studying
quantum error correction and fault-tolerance.
It is equivalent, up to taking the adjoint of a super-operator,
to its spectral norm variant, which is usually known simply as the 
{\it completely bounded norm}.
The completely bounded norm, as well as variants that include the
completely bounded trace norm, have been studied in operator
theory for many years.
(See \cite{Paulsen02} for historical comments and further details.)

The definition of these completely bounded norms, which can be found
in the section following this introduction, may seem unusual at first
glance.
It turns out, however, that they are quite natural and satisfy
many remarkable properties.
They are, in particular, much easier to reason about and to work with
than the seemingly simpler super-operator norms that are induced by
the trace norm and spectral norm, primarily because the completely
bounded norms respect the structure of tensor products while the
induced norms do not.
The physical importance of this property, within the setting of
quantum information theory, has been discussed in several sources
\cite{
  Kitaev97,
  AharonovKN98,
  ChildsPR00,
  D'ArianoPP01,
  Acin01,
  GilchristLN05,
  RosgenW05,
  Sacchi05b,
  Sacchi05,
  Rosgen08,
  Watrous08,
  PianiW09}.
Additional references that highlight the properties and uses of
completely bounded norms in quantum information include
\cite{DevetakJKR06,Jenvccova06,Perez-GarciaWPVJ08}.

One obvious question that comes to mind about the completely bounded
trace and spectral norms is: {\it can they be efficiently computed?}
Unlike the norms of operators that are most typically encountered in
quantum information theory, which are trivially computable from
spectral or singular-value decompositions, the computation of
completely bounded norms is not known to be straightforward.
To the author's knowledge there are only two papers
written prior to this one, namely \cite{Zarikian06} and
\cite{JohnstonKP09}, that present methods to compute the completely
bounded spectral or trace norm of a given super-operator.
Both papers describe iterative methods, and analyze the complexity of
each iteration of these methods, but do not analyze their rates of
convergence.
So, although these papers may describe potentially efficient methods,
they do not include complete proofs of their efficiency.

The purpose of this paper is to explain how the completely bounded
trace norm of a given super-operator (and therefore its completely
bounded spectral norm as well) can be expressed as the optimal value
of a semidefinite program whose size is polynomial in the dimension of
the spaces on which the super-operator acts.
Using known polynomial-time algorithms for solving semidefinite
programs, one obtains a provably efficient algorithm (meaning a
deterministic polynomial time algorithm) for calculating these norms.
This approach also has the obvious practical advantage that it is more
easily implemented through the use of existing semidefinite programming
optimization libraries, and allows one to take advantage of the
extensive work that has been done to solve semidefinite programs
efficiently and accurately.
Moreover, through semidefinite programming duality, one obtains a
means by which a certificate of the value of the completely bounded
trace or spectral norm of a given super-operator may be quickly
verified.

In a recent paper written independently from this one, Ben-Aroya and
Ta-Shma \cite{Ben-AroyaT09} have found a different (but related) way
to efficiently compute the completely bounded trace norm of
super-operators using convex programming.

The essence of the semidefinite programming formulation of the
completely bounded trace norm that is described in this paper appears,
at least to some extent, in the paper \cite{KitaevW00}; although it
was not made explicit or considered in full generality therein.
The present paper aims to present this formulation explicitly and
without any discussion of the {\it quantum interactive proof system}
model of computation, which is the primary focus of \cite{KitaevW00}.
A second semidefinite programming formulation of the completely
bounded trace norm is also presented, based on the
{\it competitive quantum game} framework of \cite{GutoskiW07}.
This formulation is slightly simpler, but is valid only for
super-operators that are the difference between two quantum
channels---which happens to be an important special case in quantum
information.

Semidefinite programming is useful not only as a computational tool,
but as an analytic tool as well.
The last section of this paper gives two examples along these lines
that are derived from the more general semidefinite programming
formulation of the completely bounded trace norm.
The first example concerns an alternate characterization of the completely
bounded trace norm and the second illustrates a precise sense in which
two known characterizations of the fidelity function (given by Uhlmann's
Theorem and Alberti's Theorem) are dual statements to one another.

\vspace{-3mm}

\section{Background}

The two subsections that follow aim to provide the reader with an
account of the background knowledge assumed in the remainder of the
paper.
The first subsection discusses well-known concepts from
finite-dimensional operator or matrix theory, and is mainly intended
to make clear the notation and terminology that is used later.
It also includes the definitions of the completely bounded norms that
are the focus of this paper.
The second subsection discusses semidefinite programming, using a
form that is less common than the so-called {\it standard form}
of semidefinite programs, but that is equivalent and better suited to
the needs of this paper.

\subsection*{Operators and super-operators}

The scripted letters $\X$, $\Y$, $\Z$, and $\W$ will denote vector
spaces of the form $\complex^n$ for $n\geq 1$, whose elements are
identified with $n$-dimensional column vectors.
The $j$-th elementary unit vector in such a space is denoted $e_j$,
and the unit sphere in $\X$ (with respect to the Euclidean norm) is
denoted
\[
\sphere{\X} = \left\{ u\in\X\,:\,\norm{u} = 1\right\}.
\]

For $\X = \complex^n$ and $\Y = \complex^m$, the space consisting of
all linear operators of the form $A:\X\rightarrow\Y$ is denoted
$\lin{\X,\Y}$ and is identified with the set of $m\times n$ complex
matrices in the usual way.
The notation $\lin{\X}$ is shorthand for $\lin{\X,\X}$, and the space
$\X$ is identified with $\lin{\complex,\X}$ when necessary.
An inner product on $\lin{\X,\Y}$ is defined as 
$\ip{A}{B} = \tr(A^{\ast} B)$ for all $A,B\in\lin{\X,\Y}$, where
$A^{\ast}\in\lin{\Y,\X}$ denotes the adjoint (or conjugate transpose)
of $A$.
The identity operator on $\X$ is denoted $\I_{\X}$, and for each
choice of indices $i,j$ we write $E_{i,j} = e_i e_j^{\ast}$.

Three operator norms are discussed in this paper: the 
{\it trace norm}, {\it Frobenius norm}, and {\it spectral norm},
defined as
\[
\norm{A}_1 = \tr\sqrt{A^{\ast} A}\:,
\quad\quad
\norm{A}_2 = \sqrt{\ip{A}{A}}\:,
\quad\quad\text{and}\quad\quad
\norm{A}_{\infty} = \max_{u\in\sphere{\X}}
\norm{Au}\:,
\]
respectively, for each $A\in\lin{\X,\Y}$.
For every operator $A$ it holds that
\[
\norm{A}_{\infty} \leq \norm{A}_2 \leq \norm{A}_1.
\]

The following special types of operators are also discussed.
\begin{mylist}{\parindent}
\item[1.]
An operator $X\in\lin{\X}$ is {\it Hermitian} if $X = X^{\ast}$.
The set of such operators is denoted $\herm{\X}$.

\item[2.]
An operator $P\in\lin{\X}$ is {\it positive semidefinite} if it is
Hermitian and all of its eigenvalues are nonnegative.
The set of such operators is denoted $\pos{\X}$.
The notation $P\geq 0$ also indicates that $P$ is positive
semidefinite, and more generally the notations $X\leq Y$ and $Y\geq X$
indicate that $Y - X\geq 0$ for Hermitian operators $X$ and $Y$.

\item[3.]
An operator $P\in\lin{\X}$ is {\it positive definite} if it is
both positive semidefinite and invertible.
The set of such operators is denoted $\pd{\X}$.
The notation $P>0$ also indicates that $P$ is positive definite, and
the notations $X<Y$ and $Y>X$ indicate that $Y - X> 0$ for
Hermitian operators $X$ and $Y$.

\item[4.]
An operator $\rho\in\lin{\X}$ is a {\it density operator} if it is
both positive semidefinite and has trace equal to 1.
The set of such operators is denoted $\density{\X}$.

\item[5.]
An operator $U\in\lin{\X}$ is {\it unitary} if $U^{\ast} U = \I_{\X}$.
The set of such operators is denoted $\unitary{\X}$.
\end{mylist}

A {\it super-operator} is a linear mapping of the form
$\Phi:\lin{\X}\rightarrow\lin{\Y}$, and the space of all mappings 
of this form is denoted $\trans{\X,\Y}$.
The identity super-operator on $\lin{\X}$ is denoted $\I_{\lin{\X}}$.
The adjoint super-operator to $\Phi\in\trans{\X,\Y}$ is the unique
super-operator $\Phi^{\ast}\in\trans{\Y,\X}$ for which
$\ip{Y}{\Phi(X)}=\ip{\Phi^{\ast}(Y)}{X}$ for all $X\in\lin{\X}$ and
$Y\in\lin{\Y}$.
The following special types of super-operators are discussed.
\begin{mylist}{\parindent}
\item[1.]
  $\Phi\in\trans{\X,\Y}$ is 
  {\it Hermiticity-preserving} if $\Phi(X) \in \herm{\Y}$ for every
  $X\in\herm{\X}$.

\item[2.]
  $\Phi\in\trans{\X,\Y}$ is {\it completely positive}
  if it holds that
  \[
  (\Phi\otimes\I_{\lin{\W}})(P) \in \pos{\Y\otimes\W}
  \]
  for every choice of $\W = \complex^k$ and $P\in\pos{\X\otimes\W}$.
  
\item[3.]
  $\Phi\in\trans{\X,\Y}$ is {\it trace-preserving} if
  $\tr(\Phi(X)) = \tr(X)$
  for every $X\in\lin{\X}$.

\item[4.]
  $\Phi\in\trans{\X,\Y}$ is a {\it quantum channel} if it is both
  completely positive and trace-preserving.

\end{mylist}

The {\it Choi-Jamio{\l}kowski representation}
$J(\Phi)\in\lin{\Y\otimes\X}$ of a super-operator
$\Phi\in\trans{\X,\Y}$ is the operator defined as
\[
J(\Phi) = \sum_{1\leq i,j \leq n} \Phi(E_{i,j}) \otimes E_{i,j}
\]
(where this expression assumes $\X = \complex^n$).
The mapping $J$ is a linear bijection from $\trans{\X,\Y}$ to
$\lin{\Y\otimes\X}$.
The operator $J(\Phi)$, written as an $nm\times nm$ matrix, represents
one convenient way that a super-operator may be expressed in concrete
terms.

A pair of operators $(A,B)$ in $\lin{\X,\Y\otimes\Z}$ is a
{\it Stinespring pair} for $\Phi\in\trans{\X,\Y}$ if it holds that
\begin{equation} \label{eq:Stinespring}
\Phi(X) = \tr_{\Z}\! \( A X B^{\ast} \)
\end{equation}
for all $X\in\lin{\X}$, and an expression of the form
\eqref{eq:Stinespring} is called a {\it Stinespring representation} of
$\Phi$.
Stinespring representations exist for every super-operator, provided
the space $\Z$ has dimension at least $\op{rank}(J(\Phi))$.
It is straightforward to compute such a Stinespring pair $(A,B)$ from
the Choi-Jamio{\l}kowski representation of $\Phi\in\trans{\X,\Y}$: for
any expression
\[
J(\Phi) = \sum_{l = 1}^r u_l v_l^{\ast},
\]
it holds that
\[
A = \sum_{i,j,l} \ip{e_i\otimes e_j}{u_l} E_{i,j} \otimes e_l
\quad\quad\text{and}\quad\quad
B = \sum_{i,j,l} \ip{e_i\otimes e_j}{v_l} E_{i,j} \otimes e_l
\]
forms a Stinespring pair of $\Phi$.

For every super-operator $\Phi\in\trans{\X,\Y}$, one defines
the {\it induced super-operator norms}:
\begin{align*}
  \norm{\Phi}_1 & = \max\left\{\norm{\Phi(X)}_1 \,:\, X\in\lin{\X},\,\;
  \norm{X}_1\leq 1\right\},\\[2mm]
  \norm{\Phi}_{\infty} & = \max\left\{\norm{\Phi(X)}_{\infty} \,:\,
  X\in\lin{\X},\,\;\norm{X}_{\infty}\leq 1\right\},
\end{align*}
as well as {\it completely bounded} variants of these norms:
\[
\triplenorm{\Phi}_1 = \sup_{k\geq 1}
\norm{\Phi\otimes\I_{\lin{\complex^k}}}_1
\quad\quad\text{and}\quad\quad
\triplenorm{\Phi}_{\infty} =
\sup_{k\geq 1}
\norm{\Phi\otimes\I_{\lin{\complex^k}}}_{\infty}.
\]
As was done in the introduction, we will refer to
$\triplenorm{\Phi}_1$ as the
{\it completely bounded trace norm} and to
$\triplenorm{\Phi}_{\infty}$ as the
{\it completely bounded spectral norm}.
It is common that $\triplenorm{\Phi}_1$ is denoted $\dnorm{\Phi}$ and
called the {\it diamond norm}, and that $\triplenorm{\Phi}_{\infty}$
is denoted $\norm{\Phi}_{\text{cb}}$ and called simply the 
{\it completely bounded norm}.
It holds that
\[
\triplenorm{\Phi}_1 = \norm{\Phi\otimes\I_{\lin{\X}}}_1
\quad\quad\text{and}\quad\quad
\triplenorm{\Phi}_{\infty} =
\norm{\Phi\otimes\I_{\lin{\Y}}}_{\infty},
\]
and that $\triplenorm{\Phi}_1 = \triplenorm{\Phi^{\ast}}_{\infty}$ for
every $\Phi\in\trans{\X,\Y}$.
These norms are both multiplicative with respect to tensor
products, meaning that
\[
\triplenorm{\Phi\otimes\Psi}_1 = 
\triplenorm{\Phi}_1 \triplenorm{\Psi}_1
\quad\quad\text{and}\quad\quad
\triplenorm{\Phi\otimes\Psi}_{\infty} = 
\triplenorm{\Phi}_{\infty} \triplenorm{\Psi}_{\infty}
\]
for any choice of super-operators $\Phi$ and $\Psi$.

\subsection*{Semidefinite programming}

This section gives a brief overview of semidefinite programming, which
is discussed in greater detail in several sources (including
\cite{Alizadeh95,VandenbergheB96,Lovasz03,deKlerk02}, for instance).
The particular formulation that is described here is somewhat
different than the well-known {\it standard form} that is used by most
authors, but it is equivalent and more convenient for the purposes of
this paper.

A {\it semidefinite program} over $\X=\complex^n$ and
$\Y\in\complex^m$ is specified by a triple $(\Psi,A,B)$, where
\begin{mylist}{\parindent}
\item[1.] $\Psi\in\trans{\X,\Y}$ is a Hermiticity preserving
  super-operator, and
\item[2.] $A\in\herm{\X}$ and $B\in\herm{\Y}$ are Hermitian operators.
\end{mylist}
The following two optimization problems are associated with such a
semidefinite program:
\begin{center}
  \begin{minipage}{2in}
    \centerline{\underline{Primal problem}}\vspace{-7mm}
    \begin{align*}
      \text{maximize:}\quad & \ip{A}{X}\\
      \text{subject to:}\quad & \Psi(X) \leq B,\\
      & X\in\pos{\X}.
    \end{align*}
  \end{minipage}
  \hspace*{12mm}
  \begin{minipage}{2in}
    \centerline{\underline{Dual problem}}\vspace{-7mm}
    \begin{align*}
      \text{minimize:}\quad & \ip{B}{Y}\\
      \text{subject to:}\quad & \Psi^{\ast}(Y) \geq A,\\
      & Y\in\pos{\Y}.
    \end{align*}
  \end{minipage}
\end{center}

\noindent
With these problems in mind, one defines the {\it primal feasible} set
$\A$ and the {\it dual feasible} set $\B$ as 
\begin{align*}
\A & = \left\{X\in\pos{\X}\,:\,\Psi(X) \leq B\right\},\\
\B & = \left\{Y\in\pos{\Y}\,:\,\Psi^{\ast}(Y) \geq A\right\}.
\end{align*}
Operators $X\in\A$ and $Y\in\B$ are also said to be {\it primal feasible}
and {\it dual feasible}, respectively.
For the sake of the discussion of computational efficiency below, it
will be helpful to also define, for each $\varepsilon>0$, the sets
\begin{align*}
\A_{\varepsilon} & = \left\{X\in\pos{\X}\,:\,
X + H \in \A\;\text{for all $H\in\herm{\X}$ satisfying 
  $\norm{H}_2\leq\varepsilon$}\right\},\\
\B_{\varepsilon} & = \left\{Y\in\pos{\Y}\,:\,
Y + H \in \B\;\text{for all $H\in\herm{\Y}$ satisfying 
  $\norm{H}_2\leq\varepsilon$}\right\}.
\end{align*}
Intuitively speaking, $\A_{\varepsilon}$ contains primal feasible
operators that are not too close to the boundary of the primal
feasible set, and likewise for $\B_{\varepsilon}$.

The functions $X\mapsto\ip{A}{X}$ and $Y\mapsto\ip{B}{Y}$ are
called the primal and dual {\it objective functions}, and the
{\it optimal values} associated with the primal and dual problems
are defined as follows:
\[
\alpha = \sup_{X\in\A} \ip{A}{X}\quad\quad\text{and}\quad\quad
\beta = \inf_{Y\in\B} \ip{B}{Y}.
\]
(If it is the case that $\A = \varnothing$ or $\B = \varnothing$,
the above definitions are to be interpreted as $\alpha = -\infty$ and
$\beta = \infty$, respectively.)
The supremum and infimum cannot always be replaced by the maximum and
minimum---in some cases even finite values $\alpha$ and $\beta$ may
not be achieved for any choice of $X\in\A$ and $Y\in\B$.

Semidefinite programs have associated with them a powerful theory of
{\it duality}, which refers to the special relationship between the
primal and dual problems.
The property of {\it weak duality}, which holds for all semidefinite
programs, is stated in the following theorem.

\begin{theorem}[Weak duality]
For every semidefinite program $(\Psi,A,B)$ as defined above, it holds
that $\alpha \leq \beta$.
\end{theorem}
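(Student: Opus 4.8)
The plan is to reduce the statement to a single \emph{pointwise} inequality: for every primal feasible operator $X \in \A$ and every dual feasible operator $Y \in \B$, the primal objective value is bounded above by the dual objective value, i.e.\ $\ip{A}{X} \leq \ip{B}{Y}$. Once this is established, taking the supremum over all $X \in \A$ and the infimum over all $Y \in \B$ gives $\alpha \leq \beta$ immediately. The degenerate cases $\A = \varnothing$ or $\B = \varnothing$ are disposed of first and separately, since then $\alpha = -\infty$ or $\beta = +\infty$ and the inequality is trivial; this is why the pointwise argument may assume both feasible sets are nonempty.

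The one ingredient I would record before the main computation is the elementary fact that $\ip{P}{Q} = \tr(PQ) \geq 0$ whenever $P,Q \in \pos{\X}$, which follows by writing $\tr(PQ) = \tr\bigl(\sqrt{P}\,Q\sqrt{P}\bigr)$ and observing that $\sqrt{P}\,Q\sqrt{P} \in \pos{\X}$. Restated in the form most convenient here: if $P,Q$ are Hermitian with $P \geq Q$ and $R \in \pos{\X}$, then $\ip{R}{P} \geq \ip{R}{Q}$, since $\ip{R}{P - Q} \geq 0$.

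With this in hand, the pointwise bound is a chain of three relations. Fixing $X \in \A$ and $Y \in \B$, the dual constraint $\Psi^{\ast}(Y) \geq A$ together with $X \in \pos{\X}$ gives $\ip{A}{X} \leq \ip{\Psi^{\ast}(Y)}{X}$; the defining property of the adjoint super-operator rewrites the right-hand side as $\ip{\Psi^{\ast}(Y)}{X} = \ip{Y}{\Psi(X)}$; and finally the primal constraint $\Psi(X) \leq B$ together with $Y \in \pos{\Y}$ gives $\ip{Y}{\Psi(X)} \leq \ip{Y}{B} = \ip{B}{Y}$, the last equality holding because $B$ and $Y$ are Hermitian. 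Combining these yields $\ip{A}{X} \leq \ip{B}{Y}$, and taking $\sup_{X}$ then $\inf_{Y}$ completes the argument.

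I do not expect any genuine obstacle in this proof; it is routine. The only points that call for a little care are using the Hermiticity of $A$, $B$, $X$, and $Y$ to guarantee that the inner products appearing in the chain are real (so that the inequalities are meaningful and the order of the two arguments is immaterial), and the bookkeeping that correctly handles the empty-feasible-set conventions $\alpha = -\infty$ and $\beta = +\infty$.
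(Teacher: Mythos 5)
Your proof is correct, and it is the canonical argument: the paper itself states weak duality without proof, treating it as a standard property of semidefinite programs, and your chain $\ip{A}{X} \leq \ip{\Psi^{\ast}(Y)}{X} = \ip{Y}{\Psi(X)} \leq \ip{Y}{B} = \ip{B}{Y}$, justified by the nonnegativity of $\ip{R}{S}$ for positive semidefinite $R,S$ and the defining property of the adjoint super-operator, is exactly the argument one would supply. Your separate handling of the conventions $\alpha = -\infty$ when $\A = \varnothing$ and $\beta = +\infty$ when $\B = \varnothing$ is also the right bookkeeping.
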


\noindent
This property implies that every dual feasible operator
$Y\in\B$ provides an upper bound of $\ip{B}{Y}$ on the value
$\ip{A}{X}$ that is achievable over all choices of a primal feasible
$X\in\A$, and likewise every primal feasible operator $X\in\A$
provides a lower bound of $\ip{A}{X}$ on the value $\ip{B}{Y}$ that
is achievable over all choices of a dual feasible $Y\in\B$.

It is not always the case that $\alpha = \beta$ for a given
semidefinite program $(\Psi,A,B)$, even when $\alpha$ and $\beta$ are
finite.
For most semidefinite programs that arise in practice, however, it 
is the case that $\alpha = \beta$, which is a situation known as
{\it strong duality}.
There are different conditions under which this property is
guaranteed, one of which is given by the following theorem.

\pagebreak[3]

\begin{theorem}[Slater-type condition for strong duality]
\label{theorem:strong-duality}
The following two implications hold for every semidefinite program
$(\Psi,A,B)$ as defined above.
\begin{mylist}{\parindent}
\item[1.] 
Strict primal feasibility:
If $\beta$ is finite and there exists an operator $X>0$ such that
$\Psi(X) < B$, then $\alpha = \beta$ and there exists $Y\in\B$ such
that $\ip{B}{Y} = \beta$.
\item[2.]
Strict dual feasibility:
If $\alpha$ is finite and there exists an operator $Y>0$ such that
$\Psi^{\ast}(Y) > A$, then $\alpha = \beta$ and there exists $X\in\A$
such that $\ip{A}{X} = \alpha$.
\end{mylist}
\end{theorem}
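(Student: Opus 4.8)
The plan is to observe that the two implications are interchanged by the duality symmetry of this formulation, and then to establish just one of them by a separating-hyperplane argument. The dual problem of $(\Psi,A,B)$ is precisely the primal problem of the semidefinite program $(-\Psi^{\ast},-B,-A)$ (whose variable lives in $\Y$), and this program has primal value $-\beta$ and dual value $-\alpha$; under the correspondence the strict primal feasibility hypothesis of statement~1 becomes the strict dual feasibility hypothesis of statement~2, the finiteness hypotheses match, and the conclusion ``$\alpha=\beta$ with dual attainment'' for the new program reads as ``$\alpha=\beta$ with primal attainment'' for the original. Hence it suffices to prove statement~1, and statement~2 follows by applying it to $(-\Psi^{\ast},-B,-A)$. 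So assume $\beta$ is finite and that some $X_0\in\pd{\X}$ satisfies $\Psi(X_0)<B$. Weak duality gives $\alpha\le\beta<\infty$, and since $X_0$ is primal feasible we have $\alpha\ge\ip{A}{X_0}>-\infty$, so $\alpha$ is finite. It therefore suffices to exhibit $Y\in\B$ with $\ip{B}{Y}\le\alpha$, since weak duality will then force $\ip{B}{Y}=\alpha=\beta$.

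To produce such a $Y$, I would introduce the convex set
\[
\C=\bigl\{(Z,t)\in\herm{\Y}\times\real\,:\,\exists\,X\in\pos{\X},\ \Psi(X)-B\le Z\ \text{and}\ \ip{A}{X}\ge t\bigr\},
\]
whose convexity is immediate from the linearity of $\Psi$ and of $X\mapsto\ip{A}{X}$. The point $(0,\gamma)$ lies in $\C$ exactly when some feasible $X$ achieves $\ip{A}{X}\ge\gamma$, so $(0,\gamma)\notin\C$ whenever $\gamma>\alpha$; consequently $(0,\alpha)$ is not interior to $\C$. On the other hand $\C$ has nonempty interior: since $\Psi(X_0)-B<0$, the point $(\tfrac12(\Psi(X_0)-B),\,\ip{A}{X_0}-1)$ has strict slack in both coordinates and lies in $\C$ through $X_0$ together with a whole neighbourhood. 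The separating hyperplane theorem then yields a nonzero $(Y,s)\in\herm{\Y}\times\real$ with $\ip{Y}{Z}+s\,t\le s\,\alpha$ for all $(Z,t)\in\C$. Because membership in $\C$ is preserved under increasing the first coordinate by any $P\in\pos{\Y}$ and under decreasing $t$, substituting $Z+\lambda P$ and letting $\lambda\to\infty$ forces $\ip{Y}{P}\le 0$ for every $P\ge 0$, while substituting $t-\lambda$ forces $s\ge 0$; self-duality of the positive semidefinite cone then gives $-Y\in\pos{\Y}$, so writing $\tilde Y=-Y$ we have $\tilde Y\ge 0$ and $s\ge 0$.

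The crux is ruling out the degenerate case $s=0$, and this is exactly where strict primal feasibility enters. If $s=0$ then $\tilde Y\neq 0$ and the separating inequality reduces to $\ip{\tilde Y}{Z}\ge 0$ on $\C$; taking $X=X_0$ and $Z=\Psi(X_0)-B$ gives $\ip{\tilde Y}{\Psi(X_0)-B}\ge 0$, which contradicts $\ip{\tilde Y}{B-\Psi(X_0)}>0$ (the latter holding since $\tilde Y\ge 0$ is nonzero and $B-\Psi(X_0)>0$ is positive definite). Hence $s>0$, and after rescaling so that $s=1$ I would substitute the points $(Z,t)=(\Psi(X)-B,\ip{A}{X})$ for arbitrary $X\in\pos{\X}$ into the inequality, obtaining $\ip{A-\Psi^{\ast}(\tilde Y)}{X}\le\alpha-\ip{B}{\tilde Y}$ for all $X\ge 0$. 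Letting $X$ range over the entire cone forces $\Psi^{\ast}(\tilde Y)\ge A$, so $\tilde Y\in\B$, and setting $X=0$ gives $\ip{B}{\tilde Y}\le\alpha$, which completes the argument as noted above. The main obstacle is precisely this elimination of the ``vertical'' hyperplane $s=0$: without a strictly feasible $X_0$ the separating functional need not encode a genuine dual operator, and strong duality can in fact fail.
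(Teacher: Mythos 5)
The paper does not actually prove this theorem: it is stated as known background, with the reader referred implicitly to the semidefinite programming literature cited at the start of that subsection (e.g.\ \cite{Alizadeh95,VandenbergheB96,Lovasz03,deKlerk02}), so there is no in-paper argument to compare yours against. On its own merits, your proof is correct and complete. The symmetry reduction is sound: $(-\Psi^{\ast},-B,-A)$ is a legitimate semidefinite program in the paper's format (the adjoint of a Hermiticity-preserving super-operator is Hermiticity-preserving), its primal and dual optimal values are $-\beta$ and $-\alpha$ respectively, its dual feasible set is exactly $\A$, and under this dictionary statement~2 for $(\Psi,A,B)$ is literally statement~1 for $(-\Psi^{\ast},-B,-A)$, so proving statement~1 suffices. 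The separating-hyperplane argument for statement~1 is the standard proof of Slater's condition, and every step checks out: $\C$ is convex with nonempty interior (witnessed through $X_0$), the point $(0,\alpha)$ is not interior, the recession directions of $\C$ force $s\geq 0$ and $\tilde{Y}=-Y\geq 0$ via self-duality of the positive semidefinite cone, the strictly feasible $X_0$ rules out $s=0$ (since $\tilde{Y}\geq 0$ nonzero and $B-\Psi(X_0)>0$ give $\ip{\tilde Y}{B-\Psi(X_0)}>0$), and after normalizing $s=1$ the inequality over the cone $\pos{\X}$ yields $\Psi^{\ast}(\tilde{Y})\geq A$ and $\ip{B}{\tilde{Y}}\leq\alpha$, which with weak duality pins down $\alpha=\beta=\ip{B}{\tilde Y}$. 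The only things you use beyond the paper's own Theorem~1 (weak duality) are the finite-dimensional separation theorem and self-duality of the PSD cone, both standard; the practical benefit of your argument is that it would make the paper self-contained on this point rather than deferring to external references.
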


One may consider a general computational problem that asks for the
optimal primal and dual values of a given semidefinite program,
possibly up to some specified accuracy.
There are various ways in which this may be done, one of which is to
phrase the problem as a {\it promise problem} \cite{EvenS+84} such as
the following one.

\begin{problem} \label{prob:sdp-approximation}
  The {\it semidefinite programming approximation} problem is as
  follows.
  \begin{center}
    \begin{tabular}{lp{5.5in}}
      {\it Input:} &
      A semidefinite program $(\Psi,A,B)$ over $\X = \complex^n$ and
      $\Y=\complex^m$, an accuracy parameter $\varepsilon>0$, and a
      positive integer $R$.\\[1mm]
      {\it Promise:} &
      The set $\A_{\varepsilon}$ is non-empty, and for every $X\in\A$ it
      holds that $\norm{X}_2\leq R$.
      (In the terminology of \cite{GrotschelLS93}, the primal feasible
      region $\A$ of $(\Psi,A,B)$ is {\it well-bounded}, with parameters
      $\varepsilon$ and $R$.)
      \\[1mm]
        {\it Output:} &
        A real number $\gamma$ such that 
        $\abs{\gamma - \alpha} < \varepsilon$, where $\alpha$ is the
        optimal value of the primal problem associated with
        $(\Psi,A,B)$.
    \end{tabular}
  \end{center}
\end{problem}

\noindent
The description of this problem does not explicitly state how the
super-operator $\Psi$ is to be represented, but we will assume it is
specified by the matrix representation of $J(\Psi)$.
Other forms, including Stinespring representations and Kraus
representations, are easily converted to this form.
It is also assumed that the entries of $J(\Psi)$, $A$, and $B$ have
rational real and imaginary parts.

The computational problem stated above can be solved in polynomial
time using the ellipsoid method \cite{GrotschelLS93}, as the following
theorem states.

\begin{theorem}
\label{theorem:ellipsoid-algorithm}
There exists an algorithm that solves the semidefinite programming
approximation problem stated above that runs in time polynomial in
$n$, $m$, $\log(R)$, $\log(1/\varepsilon)$, and the maximum bit-length
of the entries of $J(\Psi)$, $A$, and $B$.
\end{theorem}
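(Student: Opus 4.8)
The plan is to recognize the problem as a \emph{weak optimization} problem over a well-bounded convex body and to solve it with the ellipsoid method, following the general framework of Grötschel, Lovász, and Schrijver \cite{GrotschelLS93}. The primal feasible set
\[
\A = \left\{X\in\pos{\X}\,:\,\Psi(X)\leq B\right\}
\]
is a convex subset of the real vector space $\herm{\X}$, which I identify with $\real^{n^2}$ by fixing an orthonormal basis with respect to the (real-valued) inner product $\ip{\cdot}{\cdot}$ on Hermitian operators. Under this identification the norm $\norm{\cdot}_2$ coincides with the Euclidean norm, and the primal objective $X\mapsto\ip{A}{X}$ becomes a linear functional whose encoding length is governed by the bit-lengths of the entries of $A$. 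The promise supplies precisely the well-boundedness data the ellipsoid method requires: the condition $\norm{X}_2\leq R$ for all $X\in\A$ confines $\A$ to a ball of radius $R$, while the non-emptiness of $\A_{\varepsilon}$ furnishes a point whose entire $\varepsilon$-ball lies in $\A$, bounding the inner radius from below.

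The key step is to implement an efficient \emph{weak separation oracle} for $\A$. Given a Hermitian operator $X$, the two defining constraints $X\geq 0$ and $B-\Psi(X)\geq 0$ are each tested by an approximate smallest-eigenvalue computation, which can be performed to any prescribed precision in time polynomial in the relevant bit-lengths. If the smallest eigenvalue of $X$ is appreciably negative with approximate unit eigenvector $u$, then $uu^{\ast}$ separates $X$ from $\A$, since $\ip{uu^{\ast}}{X}=u^{\ast}Xu<0$ whereas $\ip{uu^{\ast}}{Z}\geq 0$ for every $Z\in\pos{\X}$. Likewise, if $B-\Psi(X)$ has an appreciably negative eigenvalue with approximate unit eigenvector $v$, then the identity
\[
\ip{vv^{\ast}}{B-\Psi(X)} = \ip{vv^{\ast}}{B} - \ip{\Psi^{\ast}(vv^{\ast})}{X}
\]
shows that $Z\mapsto\ip{\Psi^{\ast}(vv^{\ast})}{Z}$ separates $X$ from $\A$, where $\Psi^{\ast}(vv^{\ast})$ is computable directly from the given matrix $J(\Psi)$. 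When neither eigenvalue is appreciably negative, the oracle reports approximate membership.

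With the convex body $\A$ identified, its well-boundedness supplied by the promise, and the weak separation oracle in hand, the central theorem on weak optimization by the ellipsoid method \cite{GrotschelLS93} produces a number $\gamma$ with $\abs{\gamma-\alpha}<\varepsilon$ in time polynomial in the dimension (which is $n^2$), in $\log(R)$, in $\log(1/\varepsilon)$, in the encoding length of $A$, and in the cost of the oracle calls. Since each oracle call reduces to an eigenvalue computation on an $n\times n$ or $m\times m$ matrix whose entries are controlled by those of $J(\Psi)$ and $B$, this is exactly the claimed running time. The main obstacle will be the bookkeeping surrounding the \emph{weak} nature of the oracle: eigenvalues and eigenvectors of rational matrices are irrational, so one must work throughout with approximate separation and track how the tolerances in the eigenvalue computations, the separation guarantees, and the final optimization accuracy compose within the tolerances demanded by the ellipsoid method. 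The well-boundedness guaranteed by the promise is precisely what allows these tolerances to be met while keeping the iteration count polynomial.
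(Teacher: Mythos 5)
Your proposal is correct and takes essentially the same route the paper itself relies on: the paper states this theorem without proof, citing the ellipsoid-method framework of \cite{GrotschelLS93} and explicitly noting that the promise makes the primal feasible region $\A$ well-bounded in that framework's sense. Your sketch---identifying $\herm{\X}$ with $\real^{n^2}$, implementing weak separation for the two constraints $X\geq 0$ and $B-\Psi(X)\geq 0$ via approximate smallest-eigenvalue computations (with $\Psi^{\ast}(vv^{\ast})$ computed from $J(\Psi)$), and invoking the weak-optimization theorem with the tolerance bookkeeping deferred to \cite{GrotschelLS93}---is precisely the standard instantiation of that citation.
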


\noindent
Here, the {\it bit length} of a complex number $z = (a/b) + i(c/d)$ is
the number of bits needed to represent the 4-tuple $(a,b,c,d)$, where
$a$, $b$, $c$, and $d$ are integers represented in binary.

Note that the above problem asks only for an approximation to the
optimal primal value, but the simple transformation
$(\Psi,A,B) \rightarrow (-\Psi,-B,-A)$ shows that any algorithm for it
also allows one to approximate the optimal dual value.
(Alternately, the ellipsoid method can be applied directly to the dual
problem.)

It is possible to approximate more general classes of semidefinite
programs efficiently.
For instance, the bound $\norm{X}_2\leq R$ need not hold for every
primal feasible $X$, provided certain assumptions are known about the
size of the optimal solution.
These generalizations are not important for this paper, and the above
problem can be more easily fit to the general presentation of
\cite{GrotschelLS93} (which is described in the setting of
semidefinite programming in \cite{Lovasz03}).

It should be noted that one would typically not use the ellipsoid
method to solve semidefinite programming problems in practise, given
that {\it interior point methods} \cite{Alizadeh95,deKlerk02} are
significantly faster.
In strictly formal terms, however, interior point methods have not
been proved to run in polynomial time using the Turing machine model
of computation.

\section{A semidefinite program for the completely bounded trace norm}
\label{sec:SDP}

We will now describe and analyze a semidefinite program whose optimal 
(primal and dual) value is $\triplenorm{\Phi}_1^2$, where
$\Phi\in\trans{\X,\Y}$ is an arbitrary super-operator given by a
Stinespring representation
\[
\Phi(X) = \tr_{\Z}(A X B^{\ast})
\]
for $A,B\in\lin{\X,\Y\otimes\Z}$.
It is assumed further that $\Z$ has the minimal dimension
$\op{dim}(\Z) = \op{rank}(J(\Phi))$ for which such a Stinespring
representation exists.

The primal and dual problems for the semidefinite program we will
consider may be stated informally as follows:
\begin{center}
  \begin{minipage}{3in}
    \centerline{\underline{Primal problem}}\vspace{-7mm}
    \begin{align*}
    \text{maximize:}\quad & \ip{B B^{\ast}}{W}\\
    \text{subject to:}\quad & 
    \tr_{\Y}(W) = \tr_{\Y}\left(A\rho A^{\ast}\right),\\
    & \rho\in\density{\X},\\
    & W\in\pos{\Y\otimes\Z}.
    \end{align*}
  \end{minipage}
  \begin{minipage}{3in}
    \centerline{\underline{Dual problem}}\vspace{-7mm}
    \begin{align*}
    \text{minimize:}\quad & \norm{A^{\ast}(\I_{\Y}\otimes Z) A}_{\infty} \\
    \text{subject to:}\quad 
    & \I_{\Y}\otimes Z \geq B B^{\ast},\\
    & Z \in \pos{\Z}.\\
    \end{align*}
  \end{minipage}
\end{center}

\noindent
These problems are associated with the semidefinite program that is
more formally specified as follows.
We define a Hermiticity-preserving super-operator
\[
\Psi: \lin{\X\oplus(\Y\otimes\Z)} \rightarrow \lin{\complex\oplus\Z}
\]
as
\[
\Psi\begin{pmatrix} X & \cdot \\ \cdot & W\end{pmatrix}
= \begin{pmatrix}\tr(X) & 0\\ 0 & \tr_{\Y}(W - A X
A^{\ast})\end{pmatrix}.
\]
The adjoint super-operator
\[
\Psi^{\ast}:\lin{\complex\oplus\Z}\rightarrow\lin{\X\oplus(\Y\otimes\Z)}
\]
is given by
\[
\Psi^{\ast}\begin{pmatrix} \lambda & \cdot \\ \cdot & Z\end{pmatrix}
=
\begin{pmatrix}
\lambda \I_{\X} - A^{\ast}(\I_{\Y} \otimes Z)A & 0\\
0 & \I_{\Y} \otimes Z
\end{pmatrix}.
\]
(In these expressions of $\Psi$ and $\Psi^{\ast}$, the symbol $\cdot$
denotes an operator or vector of the appropriate dimensions upon which
the output of these super-operators does not depend.)
We also define $C\in\herm{\X\oplus(\Y\otimes\Z)}$ and 
$D\in\herm{\complex\oplus\Z}$ as
\[
C = \begin{pmatrix} 0 & 0\\ 0 & B B^{\ast} \end{pmatrix}
\quad\quad\text{and}\quad\quad
D = \begin{pmatrix} 1 & 0\\ 0 & 0 \end{pmatrix}.
\]

\pagebreak[3]
\noindent
Now, the primal and dual problem associated with $(\Psi,C,D)$ may be
expressed as follows:
\begin{center}
  \begin{minipage}{3in}
    \centerline{\underline{Primal problem}}\vspace{-7mm}
    \begin{align*}
    \text{maximize:}\quad & \ip{B B^{\ast}}{W}\\
    \text{subject to:}\quad & 
    \tr_{\Y}(W) \leq \tr_{\Y}\left(A X A^{\ast}\right),\\
    & \tr(X)\leq 1,\\
    & X\in\pos{\X},\\
    & W\in\pos{\Y\otimes\Z}.
    \end{align*}
  \end{minipage}
  \begin{minipage}{3in}
    \centerline{\underline{Dual problem}}\vspace{-7mm}
    \begin{align*}
    \text{minimize:}\quad & \lambda\\
    \text{subject to:}\quad 
    & \lambda \I_{\X} \geq A^{\ast}(\I_{\Y}\otimes Z) A \\
    & \I_{\Y}\otimes Z \geq B B^{\ast},\\
    & \lambda \geq 0,\\
    & Z \in \pos{\Z}.
    \end{align*}
  \end{minipage}
\end{center}

Notice that for any choice of a primal feasible operator
\begin{equation} \label{eq:block-matrix}
\begin{pmatrix}
X & M\\ M^{\ast} & W
\end{pmatrix},
\end{equation}
there exist operators $P\in\pos{\X}$ and
$Q\in\pos{\Y\otimes\Z}$ such that $\tr(X+P) = 1$
and 
\[
\tr_{\Y}(W+Q) = \tr_{\Y}\left(A (X+ P) A^{\ast}\right).
\]
The operator
\[
\begin{pmatrix}
X + P & M\\ M^{\ast} & W + Q
\end{pmatrix}
\]
is therefore primal feasible, and obtains at least the value achieved 
by \eqref{eq:block-matrix} (by virtue of the fact that $B B^{\ast}$ is
positive semidefinite).
This accounts for the informal statement of the primal problem above,
where the inequality constraints are replaced by equality constraints.
The dual problem above is obviously equivalent to its informal
statement, because $A^{\ast} (\I_{\Y} \otimes Z) A$ is positive
semidefinite for positive semidefinite $Z$, and therefore
\[
\min\left\{ \lambda \geq 0 \,:\,
\lambda\I_{\X} \geq A^{\ast} (\I_{\Y} \otimes Z) A\right\}
= \norm{A^{\ast} (\I_{\Y} \otimes Z) A}_{\infty}.
\]

\subsection*{Strong duality}

We will first verify that strong duality holds for the above
semidefinite program, using Theorem~\ref{theorem:strong-duality}.
First note that it is clear that the optimal primal value $\alpha$ is
finite, for it must hold that $\tr(W) \leq \norm{A}_{\infty}^2$ for
any primal feasible operator with the form \eqref{eq:block-matrix},
and therefore $\alpha \leq \norm{A}_{\infty}^2 \norm{B}_{\infty}^2$.

Now, to verify strict dual feasibility, suppose that $\mu$ and
$\lambda$ are positive real numbers such that
$\mu > \norm{B}_{\infty}^2$ and $\lambda> \mu \norm{A}_{\infty}^2$.
Then
\[
\begin{pmatrix}
\lambda & 0 \\ 0 & \mu \I_{\Z}
\end{pmatrix}
> 0
\quad\quad\text{and}\quad\quad
\Psi^{\ast}
\begin{pmatrix}
\lambda & 0 \\ 0 & \mu \I_{\Z}
\end{pmatrix}
=
\begin{pmatrix}
\lambda \I_{\X} - \mu A^{\ast} A & 0 \\ 0 & \mu \I_{\Y}\otimes\I_{\Z}
\end{pmatrix}
>
\begin{pmatrix}
0 & 0 \\ 0 & B B^{\ast}
\end{pmatrix},
\]
which illustrates strict dual feasibility.
Thus, by Theorem~\ref{theorem:strong-duality}, the optimal value
$\alpha$ associated with the primal problem is equal to the optimal
dual value $\beta$, and is achieved for some choice of a primal
feasible operator.

One may wonder whether the semidefinite program above is also strictly
primal feasible.
Having already established strong duality, it is not really essential
that this is proved, but it may be noted that strict primal
feasibility indeed does hold, relying on the assumption
$\dim(\Z)=\op{rank}(J(\Phi))$.
This observation, which happens to imply that the optimal dual value
is achieved for some dual feasible operator, will follow from the
discussion of computational efficiency below.

\subsection*{Optimal value}

Now let us verify that the optimal value $\alpha = \beta$ of our
semidefinite program is equal to $\triplenorm{\Phi}_1^{2}$.
Define $\W = \complex^k$ for 
$k = \max\{\op{dim}(\X),\,\op{dim}(\Y\otimes\Z)\}$.
Given that $\op{dim}(\W) \geq \op{dim}(\X)$, it holds that
\begin{align*}
\triplenorm{\Phi}^2_1 
& =
\max_{\substack{u,v\in\sphere{\X\otimes\W}\\U\in\unitary{\Y\otimes\W}}}
\abs{\ip{U}{\tr_{\Z}\((A\otimes\I_{\W})u
    v^{\ast}(B^{\ast}\otimes\I_{\W})\)}}^2\\
& =
\max_{\substack{u\in\sphere{\X\otimes\W}\\U\in\unitary{\Y\otimes\W}}}
\norm{
(B^{\ast}\otimes\I_{\W})(U^{\ast}\otimes\I_{\Z})(A\otimes\I_{\W})u}^2\\
& =
\max_{\substack{u\in\sphere{\X\otimes\W}\\U\in\unitary{\Y\otimes\W}}}
u^{\ast}(A^{\ast}\otimes\I_{\W})(U\otimes\I_{\Z})
(B B^{\ast}\otimes\I_{\W})(U^{\ast}\otimes\I_{\Z})(A\otimes\I_{\W}) u\\
& =
\max_{\substack{u\in\sphere{\X\otimes\W}\\U\in\unitary{\Y\otimes\W}}}
\ip{B B^{\ast}}{\tr_{\W}\left[
(U^{\ast}\otimes\I_{\Z})(A\otimes\I_{\W}) u u^{\ast} (A^{\ast}\otimes\I_{\W})
(U\otimes\I_{\Z})\right]}.
\end{align*}

Now define two sets $\Q,\R\subseteq\pos{\Y\otimes\Z}$ as
\begin{align*}
\Q & = 
\left\{W\in\pos{\Y\otimes\Z}\,:\,\tr_{\Y}(W) = \tr_{\Y}(A \rho
A^{\ast})\;\text{for some choice of $\rho\in\density{\X}$}\right\},\\
\R & = \left\{
\tr_{\W}\left[
(U^{\ast}\otimes\I_{\Z})(A\otimes\I_{\W}) u u^{\ast} 
(A^{\ast}\otimes\I_{\W})(U\otimes\I_{\Z})\right]\,:\,
u\in\sphere{\X\otimes\W},\;U\in\unitary{\Y\otimes\W}\right\}.
\end{align*}
Our interest in the set $\R$ is clear, for the equation above has
established that
\[
\triplenorm{\Phi}^2_1 = \max_{W\in\R} \ip{B B^{\ast}}{W}.
\]
The set $\Q$, on the other hand, is of interest because the optimal
value $\alpha$ of the primal problem for the semidefinite program
defined above is given by
\[
\alpha = \max_{W\in\Q} \ip{B B^{\ast}}{W}.
\]
To establish that $\alpha = \triplenorm{\Phi}_1^2$, if therefore
suffices to prove that $\Q = \R$, which is easily done as follows.

First consider an arbitrary choice of $u\in\sphere{\X\otimes\W}$ and
$U\in\unitary{\Y\otimes\W}$, and let
\[
W = \tr_{\W}\left[
(U^{\ast}\otimes\I_{\Z})(A\otimes\I_{\W}) u u^{\ast} (A^{\ast}\otimes\I_{\W})
(U\otimes\I_{\Z})\right].
\]
Then $\tr_{\Y}(W)=\tr_{\Y} \left( A \tr_{\W}(u u^{\ast}) A^{\ast}\right)$,
and so it holds that $W\in\Q$, which proves $\R\subseteq\Q$.

Now consider an arbitrary element $W\in\Q$, and let
$\rho\in\density{\X}$ be a density operator satisfying $\tr_{\Y}(W) =
\tr_{\Y}(A\rho A^{\ast})$.
Given that we have chosen $\W$ to have dimension at least as large as
that of both $\X$ and $\Y\otimes\Z$, there must exist vectors
$u\in\sphere{\X\otimes\W}$ and $w\in\Y\otimes\Z\otimes\W$ such that
$\rho=\tr_{\W}(u u^{\ast})$ and $W = \tr_{\W}(w w^{\ast})$.
This implies that
\[
\tr_{\Y\otimes\W}(w w^{\ast})
= \tr_{\Y\otimes\W}\left((A\otimes\I_{\W})u
u^{\ast}(A^{\ast}\otimes\I_{\W})\right),
\]
so there must exist $U\in\unitary{\Y\otimes\W}$ such that
$(U^{\ast}\otimes\I_{\Z})(A\otimes\I_{\W})u=w$.
Therefore
\[
W = \tr_{\W}(w w^{\ast}) = 
\tr_{\W}\left[
  (U^{\ast}\otimes\I_{\Z})(A\otimes\I_{\W}) u u^{\ast}
  (A^{\ast}\otimes\I_{\W}) (U\otimes\I_{\Z})\right],
\]
which proves that $W\in\R$, so that $\Q\subseteq\R$ as required.

\subsection*{Computational efficiency}

Now let us verify that the optimal value $\triplenorm{\Phi}_1^2$ of
the semidefinite program described above can be approximated by an
efficient computation.
By Theorem~\ref{theorem:ellipsoid-algorithm} our task is to argue
that suitable parameters $R$ and $\varepsilon$ for the promise in
Problem~\ref{prob:sdp-approximation} can be determined.

For the sake of clarity, let us summarize our notation:
we have $\X=\complex^n$, $\Y=\complex^m$, and $\Z = \complex^r$, and
$\Phi\in\trans{\X,\Y}$ is the super-operator given by
\[
\Phi(X) = \tr_{\Z}(A X B^{\ast})
\]
for which we wish to approximate $\triplenorm{\Phi}_1^2$.
The semidefinite program that represents this quantity is represented
by the Hermiticity-preserving super-operator
$\Psi\in\trans{\X\oplus(\Y\otimes\Z),\complex\oplus\Z}$ and
Hermitian operators $C\in\herm{\X\oplus(\Y\otimes\Z)}$ and
$D\in\herm{\complex\oplus\Z}$ as described above.
We will take $N$ to be the total bit-length of this semidefinite
program, which is polynomially related to $n$, $m$ and the maximum
bit-length of the entries of $A$ and $B$.

First, it is clear that every primal feasible operator has trace
bounded by $1 + \norm{A}_{\infty}^2$.
Given that the Frobenius norm is upper-bounded by the trace for
positive semidefinite operators, it therefore suffices to choose
$R = 1 + \norm{A}_{\infty}^2$, which is obviously bounded by
$2^{c N}$ for some positive integer constant $c$.

The specification of $\varepsilon$ is slightly more complicated.
Consider first the operator $\tr_{\Y}(A A^{\ast})$.
We have chosen $\Z$ to have minimal dimension to admit a Stinespring
representation of $\Phi$, and from this assumption it follows that
$\tr_{\Y}(A A^{\ast})$ is positive definite.
Using the assumption that the real and imaginary parts of the entries
of $A$ are rational, along with the fact that nonzero roots of integer
polynomials cannot be too close to zero (see, for instance,
Theorem 2.9 of \cite{Bugeaud04}), one may derive a lower-bound on the
smallest eigenvalue of $\tr_{\Y}(A A^{\ast})$.
For the purposes of this analysis, it suffices to note that there
exists an integer constant $d_0 \geq 1$ such that
for $\delta = 2^{-d_0 N}$ we have that the smallest eigenvalue of
$\tr_{\Y}(A A^{\ast})$ is at least $\delta$, and therefore 
$\delta \, \I_{\Z} \leq \tr_{\Y}(A A^{\ast})$.

Now consider the operator
\[
P = \begin{pmatrix}
X & 0\\ 0& W
\end{pmatrix}
\]
where
\[
X = \frac{3}{4 n} \I_{\X}
\quad\quad\text{and}\quad\quad
W = \frac{3}{8nm}\I_{\Y}
\otimes \tr_{\Y}(A A^{\ast}),
\]
along with any choice of a real number $\varepsilon > 0$ that
satisfies
\[
\varepsilon \leq \frac{\delta}{8 nm}.
\]
Let us note, in particular, that this bound holds for 
$\varepsilon = 2^{-dN}$ for some choice of a positive integer
constant $d$.
It is our goal to show that every Hermitian operator whose distance
from $P$ is at most $\varepsilon$ (with respect to the Frobenius norm)
lies within the primal feasible set $\A$, and therefore that
$\A_{\varepsilon}$ is nonempty.
In other words, for any choice of operators
$H\in\herm{\X}$, $K\in\herm{\Y\otimes\Z}$, and
$M\in\lin{\Y\otimes\Z,\X}$ satisfying
\[
\norm{
\begin{pmatrix}
H & M\\ M^{\ast} & K
\end{pmatrix}
}_2 < \varepsilon,
\]
we wish to prove that
\begin{equation} \label{eq:perturbed}
\begin{pmatrix}
X + H & M\\ M^{\ast} & W + K
\end{pmatrix}
\end{equation}
is primal feasible.

It is clear that $\varepsilon \I < P$, and therefore
\eqref{eq:perturbed} is positive semidefinite.
As $\norm{K}_{\infty}<\varepsilon$ it follows that
\[
W + K \leq 
W + \varepsilon\,\I_{\Y\otimes\Z}
\leq
\frac{1}{2nm} \I_{\Y} \otimes \tr_{\Y}(A A^{\ast})
\]
and therefore
\[
\tr_{\Y}(W + K) \leq \frac{1}{2 n} \tr_{\Y}(A A^{\ast}).
\]
As $\norm{H}_{\infty} \leq \varepsilon$ it holds that
\[
\frac{1}{2 n} \I_{\X} \leq X - \varepsilon \I_{\X} \leq 
X + H
\]
and therefore
\[
\frac{1}{2 n} \tr_{\Y}(A A^{\ast}) \leq \tr_{\Y}(A (X + H) A^{\ast}).
\]
It follows that
$\tr_{\Y}(W + K)\leq \tr_{\Y}(A (X + H) A^{\ast})$
and therefore the above operator \eqref{eq:perturbed} is primal
feasible as required.

We have shown that the requirements of the promise in
Problem~\ref{prob:sdp-approximation} are met for $R = 2^{c N}$ and
$\varepsilon = 2^{-d N}$ for some positive integer constants $c$ and
$d$.
By Theorem~\ref{theorem:ellipsoid-algorithm} the value
$\triplenorm{\Phi}_1^2$ may therefore be approximated to within error
$\varepsilon$ in time polynomial in $n$, $m$ and the size of the
entries of $A$ and $B$.
(It is possible of course to choose a smaller error, 
$\varepsilon = 2^{-p(N)}$ for any polynomial $p$ for instance, if
this is desired.)

\section{A simpler semidefinite program for quantum channel distance}

A somewhat simpler semidefinite program exists for the completely
bounded trace norm of the difference between two quantum channels,
which is a special case that is relevant to quantum information.
This case was discussed in \cite{GilchristLN05}, and shown to reduce
to a convex optimization problem.
The discussion that follows is somewhat different, and is derived from
the {\it refereed quantum games} framework of \cite{GutoskiW07}.

Suppose hereafter in this section that $\Phi = \Phi_0 - \Phi_1$ for
quantum channels $\Phi_0,\Phi_1\in\trans{\X,\Y}$, and consider the
semidefinite program whose primal and dual problems are as follows:
\begin{center}
  \begin{minipage}{2.5in}
    \centerline{\underline{Primal problem}}\vspace{-7mm}
    \begin{align*}
      \text{maximize:}\quad & \ip{J(\Phi)}{W}\\
      \text{subject to:}\quad & W\leq \I_{\Y}\otimes \rho,\\
      & W\in\pos{\Y\otimes\X},\\
      & \rho\in\density{\X}.
    \end{align*}
  \end{minipage}
  \hspace*{12mm}
  \begin{minipage}{2.5in}
    \centerline{\underline{Dual problem}}\vspace{-7mm}
    \begin{align*}
      \text{minimize:}\quad & \norm{\tr_{\Y}(Z)}_{\infty}\\
      \text{subject to:}\quad & Z\geq J(\Phi),\\
      & Z\in\pos{\Y\otimes\X}.\\
    \end{align*}
  \end{minipage}
\end{center}

\noindent
As in the previous section, these problems can be matched to the
formal description of a semidefinite program $(\Psi,C,D)$, for which
strong duality is easily proved.
Our goal will be to prove that the optimal value of this semidefinite
program is given by $\frac{1}{2}\triplenorm{\Phi}_1$.

Given that $\Phi$ is the difference between completely positive
super-operators, it holds \cite{GilchristLN05,RosgenW05,Watrous05} that
\[
\triplenorm{\Phi}_1 
= \max_{u\in\sphere{\X\otimes\X}} 
\norm{(\Phi\otimes\I_{\lin{\X}})(u u^{\ast})}_1.
\]
Given that the operator $(\Phi\otimes\I_{\lin{\X}})(u u^{\ast})$ is
the difference between two density operators for every
$u\in\sphere{\X\otimes\W}$, it follows that
\[
\triplenorm{\Phi}_1
= 2 \max\left\{
\ip{P}{(\Phi\otimes\I_{\lin{\X}})(u u^{\ast})}
\,:\,
u\in\sphere{\X\otimes\X},\; P\in\pos{\Y\otimes\X},\;P\leq
\I_{\Y\otimes\X}
\right\}.
\]

Now, for every unit vector $u\in\X\otimes\X$ there is a corresponding
operator $B\in\lin{\X}$ with $\norm{B}_2 = 1$ such that
\[
u=\sum_{1\leq i,j\leq n} \ip{E_{i,j}}{B} e_i \otimes e_j.
\]
For this choice of $B$ we have
\[
(\I_{\Y}\otimes B) J(\Phi) (\I_{\Y}\otimes B^{\ast})
= (\Phi\otimes\I_{\lin{\X}})(u u^{\ast})
\]
and so
\[
\triplenorm{\Phi}_1
= 2 \max_{B,P}
\ip{(\I\otimes B^{\ast}) P (\I\otimes B)}{J(\Phi)}
\]
where the maximum is over all $B\in\lin{\X}$ with $\norm{B}_2 = 1$ and
$P\in\pos{\Y\otimes\X}$ with $P\leq \I_{\Y\otimes\X}$.

Now define sets $\Q$ and $\R$ as follows:
\begin{align*}
\Q & = \left\{R\in\pos{\Y\otimes\X}\,:\,R \leq \I_{\Y} \otimes
\rho \;\text{for some $\rho\in\density{\X}$}\right\},\\
\R & = \left\{(\I_{\Y}\otimes B^{\ast}) P (\I_{\Y}\otimes B)\,:\,
B\in\lin{\X},\,P\in\pos{\Y\otimes\X},\;
\norm{B}_2 = 1,\; P\leq \I_{\Y\otimes\X}\right\}.
\end{align*}
It holds that
\[
\triplenorm{\Phi}_1 = 2 \sup_{X\in\R} \ip{J(\Phi)}{X}
\]
while the optimal value of the semidefinite program is
\[
\alpha = \sup_{X\in\Q} \ip{J(\Phi)}{X}.
\]
The fact that $\alpha = \frac{1}{2}\triplenorm{\Phi}_1$ therefore
follows from the equality $\Q=\R$, which is easily proved by selecting
$\rho$ or $B$ so that $\rho = B^{\ast} B$.

\section{Connections with known results}

This section describes two interesting connections between the
semidefinite programming formulation from Section~\ref{sec:SDP} and
known results, the first being directly about completely bounded
norms, and the second concerning the fidelity function.

\subsection*{Spectral norms of Stinespring representations}

The following theorem gives an alternate characterization of the
completely bounded trace norm (or diamond norm).
Proofs can be found in Kitaev, Shen and Vyalyi \cite{KitaevSV02} and
Paulsen \cite{Paulsen02}.
The two proofs use rather different techniques, and here the theorem
is proved in a third way using semidefinite programming duality.

\begin{theorem}
For every super-operator $\Phi\in\trans{\X,\Y}$, it holds that
\begin{equation} \label{eq:diamond-inf}
\triplenorm{\Phi}_1 = \inf_{(A,B)} \norm{A}_{\infty}\norm{B}_{\infty},
\end{equation}
where the infimum is over all Stinespring pairs $(A,B)$ for $\Phi$.
\end{theorem}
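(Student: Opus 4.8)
The plan is to obtain both inequalities hidden in \eqref{eq:diamond-inf} from the semidefinite program of Section~\ref{sec:SDP}, whose common optimal value is $\triplenorm{\Phi}_1^2$, together with its duality theory. First I would dispose of the inequality $\triplenorm{\Phi}_1\leq\norm{A}_\infty\norm{B}_\infty$, which must hold for \emph{every} Stinespring pair $(A,B)$ of $\Phi$. This can be read off the maximization formula for $\triplenorm{\Phi}_1^2$ derived in Section~\ref{sec:SDP}, which is valid for any Stinespring representation: writing
\[
\triplenorm{\Phi}_1 = \max_{u,U}\norm{(B^\ast\otimes\I_\W)(U^\ast\otimes\I_\Z)(A\otimes\I_\W)u}
\]
over unit vectors $u$ and unitaries $U$, and bounding the spectral norm of $A\otimes\I_\W$ by $\norm{A}_\infty$, that of $B^\ast\otimes\I_\W$ by $\norm{B}_\infty$, and the unitary factor by $1$, one gets $\triplenorm{\Phi}_1\leq\norm{A}_\infty\norm{B}_\infty$. (Equivalently, this is weak duality applied to the program built from $(A,B)$, using the dual-feasible point $Z=\norm{B}_\infty^2\,\I_\Z$, whose dual objective value is $\norm{A}_\infty^2\norm{B}_\infty^2$.) Taking the infimum over all pairs yields $\triplenorm{\Phi}_1\leq\inf_{(A,B)}\norm{A}_\infty\norm{B}_\infty$.

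The reverse inequality is where duality carries the argument. Fix a Stinespring representation $\Phi(X)=\tr_{\Z}(AXB^\ast)$ with $\op{dim}(\Z)=\op{rank}(J(\Phi))$, so that the strong duality established in Section~\ref{sec:SDP} applies and supplies an optimal dual operator $Z\in\pos{\Z}$ with $\I_\Y\otimes Z\geq BB^\ast$ and $\norm{A^\ast(\I_\Y\otimes Z)A}_\infty=\triplenorm{\Phi}_1^2$. Supposing momentarily that $Z$ is invertible, I would set
\[
\tilde A = (\I_\Y\otimes Z^{1/2})A, \qquad \tilde B = (\I_\Y\otimes Z^{-1/2})B .
\]
Two verifications then remain. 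That $(\tilde A,\tilde B)$ is again a Stinespring pair for $\Phi$ follows from the identity $\tr_{\Z}\big((\I_\Y\otimes Z^{1/2})M(\I_\Y\otimes Z^{-1/2})\big)=\tr_{\Z}(M)$, which holds because $Z^{\pm1/2}$ acts only on the factor $\Z$ being traced out; applying it to $M=AXB^\ast$ gives $\tr_{\Z}(\tilde A X\tilde B^\ast)=\Phi(X)$. For the norms, $\tilde A^\ast\tilde A=A^\ast(\I_\Y\otimes Z)A$ gives $\norm{\tilde A}_\infty=\triplenorm{\Phi}_1$, while conjugating $BB^\ast\leq\I_\Y\otimes Z$ by $\I_\Y\otimes Z^{-1/2}$ gives $\tilde B\tilde B^\ast\leq\I_{\Y\otimes\Z}$, hence $\norm{\tilde B}_\infty\leq 1$. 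Thus $\norm{\tilde A}_\infty\norm{\tilde B}_\infty\leq\triplenorm{\Phi}_1$, and the infimum is at most $\triplenorm{\Phi}_1$.

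I expect the one genuine obstacle to be that the optimal $Z$ need not be invertible, leaving $Z^{-1/2}$ undefined. I would resolve this by perturbation: replace $Z$ by $Z_\varepsilon=Z+\varepsilon\,\I_\Z>0$, which remains dual feasible since $\I_\Y\otimes Z_\varepsilon\geq\I_\Y\otimes Z\geq BB^\ast$ and satisfies $\norm{A^\ast(\I_\Y\otimes Z_\varepsilon)A}_\infty\leq\triplenorm{\Phi}_1^2+\varepsilon\norm{A}_\infty^2$. Running the construction above with $Z_\varepsilon$ produces, for each $\varepsilon>0$, an honest Stinespring pair whose product of spectral norms is at most $(\triplenorm{\Phi}_1^2+\varepsilon\norm{A}_\infty^2)^{1/2}$; letting $\varepsilon\to0^+$ forces the infimum down to at most $\triplenorm{\Phi}_1$. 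Together with the first inequality this establishes \eqref{eq:diamond-inf}, and it also explains why the statement is phrased with an infimum rather than a minimum, since attainment would in general require the optimal dual solution to be invertible. The remaining tasks, namely the partial-trace identity and order preservation under conjugation, are routine.
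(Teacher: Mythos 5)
Your proposal is correct and follows essentially the same route as the paper: both inequalities are extracted from the Section~\ref{sec:SDP} semidefinite program, with the decisive step being the rescaled Stinespring pair $\bigl((\I_{\Y}\otimes Z^{1/2})A,\,(\I_{\Y}\otimes Z^{-1/2})B\bigr)$ built from a (near-)optimal dual solution. The only differences are cosmetic—the paper proves the easy inequality by a direct trace-norm estimate rather than via the SDP, and it avoids the invertibility issue by restricting the dual variable to positive definite $Z$ (accepting non-attainment) instead of invoking dual attainment plus an $\varepsilon\I_{\Z}$ perturbation; note also that dual attainment comes from strict \emph{primal} feasibility under the minimality assumption (as the paper points out), not from strong duality per se, though your perturbation step renders that attainment claim unnecessary anyway.
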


\begin{proof}
For any Stinespring pair $(A,B)$ of $\Phi$, where
$A,B\in\lin{\X,\Y\otimes\Z}$, and for any choice of $\W = \complex^k$,
it holds that
\begin{align*}
\norm{(\Phi\otimes\I_{\lin{\W}})(X)}_1 
& = \norm{\tr_{\Z} \left[ (A\otimes\I_{\W}) X 
(B^{\ast}\otimes\I_{\W})\right]}_1\\
& \leq 
\norm{(A\otimes\I_{\W}) X (B^{\ast}\otimes\I_{\W})}_1 \\
& \leq
\norm{A}_{\infty} \norm{X}_1 \norm{B}_{\infty}
\end{align*}
for all $X\in\lin{\X\otimes\W}$.
It follows that $\triplenorm{\Phi}_1\leq\norm{A}_{\infty}\norm{B}_{\infty}$.

To prove that the infimum is no larger than $\triplenorm{\Phi}_1$,
first choose an arbitrary Stinespring pair $(A,B)$ of $\Phi$, where
$A,B\in\lin{\X,\Y\otimes\Z}$.
The optimal value for the dual problem stated in Section~\ref{sec:SDP}
does not change if $Z$ is restricted to be positive definite, provided
we accept that an optimal solution may not be achieved.
We therefore have
\[
\triplenorm{\Phi}_1^2 = 
\inf\{\norm{A^{\ast}(\I_{\Y}\otimes Z) A}_{\infty}\,:\,
\I_{\Y}\otimes Z \geq B B^{\ast},\;
Z \in \pd{\Z}\}.
\]
Thus, for a given $\varepsilon > 0$, we may choose $Z\in\pd{\Z}$ such
that
\[
\norm{A^{\ast}(\I_{\Y}\otimes Z) A}_{\infty}
\leq \left(\triplenorm{\Phi}_1 + \varepsilon\right)^2
\]
and $\I_{\Y}\otimes Z \geq B B^{\ast}$.
This second inequality is equivalent to
\[
\norm{
\left(\I_{\Y}\otimes Z^{-1/2}\right)B B^{\ast}
\left(\I_{\Y}\otimes Z^{-1/2}\right)}_{\infty}\leq 1.
\]
So now we have that
\[
\norm{\left(\I_{\Y}\otimes Z^{1/2}\right)A}_{\infty}\,
\norm{\left(\I_{\Y}\otimes Z^{-1/2}\right)B}_{\infty}
\leq \triplenorm{\Phi}_1 + \varepsilon,
\]
and it holds that
\[
\left(
\left(\I_{\Y}\otimes Z^{1/2}\right)A,\,
\left(\I_{\Y}\otimes Z^{-1/2}\right)B\right)
\]
is a Stinespring pair for $\Phi$.
This establishes that the infimum equals $\triplenorm{\Phi}_1$ in the
expression \eqref{eq:diamond-inf}, which completes the proof.
\end{proof}

\subsection*{Connection with fidelity}

Consider the semidefinite program from Section~\ref{sec:SDP}, for the
special case where $\X = \complex$.
Replacing $A$ and $B$ with vectors $u,v\in\Y\otimes\Z$, and making
simplifications, the problems become as follows:
\begin{center}
  \begin{minipage}{3in}
    \centerline{\underline{Primal problem}}\vspace{-7mm}
    \begin{align*}
    \text{maximize:}\quad & \ip{v v^{\ast}}{W}\\
    \text{subject to:}\quad & 
    \tr_{\Y}(W) \leq \tr_{\Y}\(u u^{\ast}\),\\
    & W\in\pos{\Y\otimes\Z}.
    \end{align*}
  \end{minipage}
  \begin{minipage}{3in}
    \centerline{\underline{Dual problem}}\vspace{-7mm}
    \begin{align*}
    \text{minimize:}\quad & \ip{\tr_{\Y}(u u^{\ast})}{Z} \\
    \text{subject to:}\quad 
    & \I_{\Y}\otimes Z \geq v v^{\ast},\\
    & Z \in \pos{\Z}.
    \end{align*}
  \end{minipage}
\end{center}

The quantity that is represented by the optimal value of these
problems is given by the {\it fidelity} function, which is defined as
\[
\fid(P,Q) = \norm{\sqrt{P}\sqrt{Q}}_1 = \tr\sqrt{\sqrt{P} Q \sqrt{P}}
\]
for positive semidefinite operators $P$ and $Q$.
In particular, the optimal value (for the primal and dual problems) is
\begin{equation} \label{eq:fidelity1}
\fid \( \tr_{\Y}(u u^{\ast}) , \tr_{\Y}(v v^{\ast}) \)^2,
\end{equation}
as is now explained.

First, the optimal value of the primal problem follows
from Uhlmann's Theorem \cite{Uhlmann76}, which is as follows.

\begin{theorem}[Uhlmann's Theorem] \label{theorem:Uhlmann}
Let $\Y$ and $\Z$ be finite-dimensional complex vector spaces, and
let $P,Q\in\pos{\Z}$ be positive semidefinite operators, both having
rank at most $\op{dim}(\Y)$.
Then for any choice of $v\in\Y\otimes\Z$ satisfying
$\tr_{\Y}(v v^{\ast}) = Q$, it holds that
\[
\fid(P,Q) = \max\left\{ \abs{\ip{u}{v}}\,:\,
u\in\Y\otimes\Z,\;\tr_{\Y}(u u^{\ast}) = P\right\}.
\]
\end{theorem}

\noindent
It is straightforward to obtain from this theorem (along with simple
properties of the fidelity) the following corollary, which is
precisely the statement that the optimal primal value of our
semidefinite program is given by the fidelity.

\begin{cor}
Assume $u,v\in\Y\otimes\Z$ are vectors, and let
$P = \tr_{\Y}(u u^{\ast})$ and 
$Q = \tr_{\Y}(v v^{\ast})$.
Then
\[
\fid(P,Q)^2 = \op{max}\left\{\ip{v v^{\ast}}{W}\,:\,
W\in\pos{\Y\otimes\Z},\: \tr_{\Y}(W) \leq P\right\}.
\]
\end{cor}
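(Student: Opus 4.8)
The plan is to prove the claimed equality by establishing the two inequalities separately: that the maximum is at least $\fid(P,Q)^2$ and at most $\fid(P,Q)^2$. Throughout I would use that $P=\tr_{\Y}(uu^{\ast})$ and $Q=\tr_{\Y}(vv^{\ast})$ are positive semidefinite operators on $\Z$ whose ranks equal the Schmidt ranks of $u$ and $v$ and are therefore at most $\op{dim}(\Y)$. This is exactly the hypothesis needed for Uhlmann's Theorem (Theorem~\ref{theorem:Uhlmann}) to apply to the pair $(P,Q)$, with the given vector $v$ playing the role of a fixed purification of $Q$.

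For the lower bound I would invoke Uhlmann's Theorem directly to produce a vector $u'$ with $\tr_{\Y}(u'u'^{\ast})=P$ and $\abs{\ip{u'}{v}}=\fid(P,Q)$. Setting $W=u'u'^{\ast}$ gives a feasible point, since $W\in\pos{\Y\otimes\Z}$ and $\tr_{\Y}(W)=P\leq P$, with objective value $\ip{vv^{\ast}}{W}=\abs{\ip{u'}{v}}^2=\fid(P,Q)^2$. This shows the optimum is at least $\fid(P,Q)^2$ and, as a bonus, that it is attained, which justifies writing $\op{max}$ rather than $\sup$.

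The upper bound is where the real content lies: the feasible set ranges over all of $\pos{\Y\otimes\Z}$, not just rank-one operators, so I must rule out higher-rank $W$ beating the rank-one optimizer. The key observation is the pure-state fidelity identity $\ip{vv^{\ast}}{W}=v^{\ast}Wv=\fid(vv^{\ast},W)^2$, which holds because $vv^{\ast}$ is rank one. I would then apply the data-processing inequality for the fidelity under the partial-trace channel $\tr_{\Y}$, giving $\fid(vv^{\ast},W)\leq\fid(\tr_{\Y}(vv^{\ast}),\tr_{\Y}(W))=\fid(Q,\tr_{\Y}(W))$, and finally monotonicity of the fidelity in each argument under the ordering $\leq$ (a consequence of operator monotonicity of the square root), which uses $\tr_{\Y}(W)\leq P$ to yield $\fid(Q,\tr_{\Y}(W))\leq\fid(Q,P)=\fid(P,Q)$. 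Chaining these gives $\ip{vv^{\ast}}{W}=v^{\ast}Wv=\fid(vv^{\ast},W)^2\leq\fid(Q,\tr_{\Y}(W))^2\leq\fid(Q,P)^2=\fid(P,Q)^2$ for every feasible $W$, which combined with the lower bound completes the proof.

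The step I expect to be the main obstacle is precisely this upper bound, i.e.\ verifying that no higher-rank $W$ can exceed the rank-one value; the pure-state identity together with data processing is what defuses it, reducing the general case to a statement about the reduced operators on $\Z$. The only point requiring care is that the three ``simple properties of the fidelity'' invoked here (the pure-state identity, the data-processing inequality, and monotonicity under $\leq$) must be stated for general positive semidefinite operators rather than only for density operators, since $W$ and $vv^{\ast}$ need not have unit trace; all three extend to this generality without difficulty, but I would flag this explicitly.
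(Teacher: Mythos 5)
Your proof is correct and takes essentially the same route as the paper, which does not spell out a proof but asserts the corollary follows from Uhlmann's Theorem ``along with simple properties of the fidelity'': your lower bound is exactly the Uhlmann step (which also yields attainment), and your upper bound (the pure-state identity $\ip{v v^{\ast}}{W}=\fid(v v^{\ast},W)^2$, data processing under $\tr_{\Y}$, and monotonicity of $\fid$ under $\leq$) is a correct instantiation of those unstated simple properties, extended as you note to unnormalized positive semidefinite arguments. No gaps.
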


The optimal value of the dual problem is, of course, equal to 
\eqref{eq:fidelity1} by strong duality.
A different way to evaluate the optimal dual value begins with the
following simple proposition.

\begin{prop}
For any vector $v\in\Y\otimes\Z$ and any positive definite operator 
$Z\in \pd{\Z}$ it holds that $\I_{\Y}\otimes Z \geq v v^{\ast}$
if and only if $\ip{\tr_{\Y}(v v^{\ast})}{Z^{-1}} \leq 1$.
\end{prop}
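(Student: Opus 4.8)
The plan is to prove the equivalence by reducing the operator inequality $\I_{\Y}\otimes Z\geq v v^{\ast}$ to a scalar condition on $v$ and $Z^{-1}$, exploiting the rank-one structure of $v v^{\ast}$. The key observation is that for any Hermitian operator $H>0$ and any vector $w$, the inequality $H\geq w w^{\ast}$ is equivalent, after conjugating by $H^{-1/2}$, to $\I\geq (H^{-1/2}w)(H^{-1/2}w)^{\ast}$. Since a rank-one projector-like operator $x x^{\ast}$ satisfies $\I\geq x x^{\ast}$ precisely when $\norm{x}^2\leq 1$, this collapses the operator inequality to a single scalar inequality. I would apply this with $H=\I_{\Y}\otimes Z$ (which is positive definite because $Z\in\pd{\Z}$) and $w=v$.

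The concrete steps are as follows. First, I would note that $\I_{\Y}\otimes Z$ is invertible with $(\I_{\Y}\otimes Z)^{-1/2}=\I_{\Y}\otimes Z^{-1/2}$, using that $Z>0$. Then $\I_{\Y}\otimes Z\geq v v^{\ast}$ holds if and only if, conjugating both sides by $\I_{\Y}\otimes Z^{-1/2}$,
\[
\I_{\Y\otimes\Z}\geq \left(\I_{\Y}\otimes Z^{-1/2}\right)v v^{\ast}\left(\I_{\Y}\otimes Z^{-1/2}\right)
= \left[\left(\I_{\Y}\otimes Z^{-1/2}\right)v\right]\left[\left(\I_{\Y}\otimes Z^{-1/2}\right)v\right]^{\ast}.
\]
Setting $x=\left(\I_{\Y}\otimes Z^{-1/2}\right)v$, the right-hand side is the rank-one operator $x x^{\ast}$, and $\I\geq x x^{\ast}$ is equivalent to $\norm{x}^2\leq 1$.

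Finally I would identify this norm with the stated quantity. Computing directly,
\[
\norm{x}^2 = v^{\ast}\left(\I_{\Y}\otimes Z^{-1}\right)v
= \ip{v v^{\ast}}{\I_{\Y}\otimes Z^{-1}}
= \ip{\tr_{\Y}(v v^{\ast})}{Z^{-1}},
\]
where the last equality uses the partial-trace adjoint relation $\ip{M}{\I_{\Y}\otimes N}=\ip{\tr_{\Y}(M)}{N}$. Chaining the equivalences yields the claim.

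I do not anticipate a serious obstacle here, as the argument is elementary once the conjugation trick is in place; the only point requiring a little care is the elementary fact that $\I\geq x x^{\ast}\iff\norm{x}^2\leq 1$ for a rank-one operator, which follows immediately from testing the inequality on $x$ itself (giving the necessary direction) and on an arbitrary vector via Cauchy–Schwarz (giving sufficiency). The positive definiteness of $Z$, rather than mere positive semidefiniteness, is what licenses the use of $Z^{-1}$ and $Z^{-1/2}$ throughout, so I would make sure the hypothesis $Z\in\pd{\Z}$ is invoked at the step where the inverse square root is introduced.
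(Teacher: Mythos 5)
Your proposal is correct and follows essentially the same route as the paper's proof: conjugation by $\I_{\Y}\otimes Z^{-1/2}$, reduction of the rank-one operator inequality to the scalar condition $\norm{(\I_{\Y}\otimes Z^{-1/2})v}\leq 1$, and evaluation of that norm via the partial-trace adjoint identity $\ip{v v^{\ast}}{\I_{\Y}\otimes Z^{-1}}=\ip{\tr_{\Y}(v v^{\ast})}{Z^{-1}}$. No gaps; the argument is complete as outlined.
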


\begin{proof}
It holds that $\I_{\Y}\otimes Z \geq v v^{\ast}$ if and only if
\begin{equation} \label{eq:Alberti-2}
\(\I_{\Y} \otimes Z^{-1/2}\) v v^{\ast}\(\I_{\Y} \otimes Z^{-1/2}\)
\leq \I_{\Y\otimes\Z}.
\end{equation}
Given that the operator on the left-hand-side of \eqref{eq:Alberti-2}
is positive semidefinite and has rank equal to~1, we have that
\eqref{eq:Alberti-2} is equivalent to
\[
\norm{\(\I_{\Y} \otimes Z^{-1/2}\)v} \leq 1,
\]
which in turn is equivalent to
\[
\tr\( \(\I_{\Y} \otimes Z^{-1/2}\)v v^{\ast} \(\I_{\Y} \otimes
Z^{-1/2}\)\)
\leq 1.
\]
As
\[
\tr\( \(\I_{\Y} \otimes Z^{-1/2}\)v v^{\ast} \(\I_{\Y} \otimes
Z^{-1/2}\)\)
=
\ip{\tr_{\Y}(v v^{\ast})}{Z^{-1}},
\]
the proof is complete.
\end{proof}

\noindent
We have that the optimal value of the dual problem does not
change if $Z$ is optimized over only positive definite rather than
positive semidefinite operators (again accepting that the optimal
value may not be achieved for such an operator).
Combined with the proposition just proved, we find that the optimal
dual value is given by
\[
\beta = \inf \left\{ \ip{\tr_{\Y}(u u^{\ast})}{Z}\,:\,
Z\in\pd{\Z},\; \sip{\tr_{\Y}(v v^{\ast})}{Z^{-1}}\leq 1\right\}.
\]
That this value is given by \eqref{eq:fidelity1} follows from
a different characterization of the fidelity due to Alberti
\cite{Alberti83}.

\begin{theorem}[Alberti's Theorem] \label{theorem:Alberti}
  Let $P,Q\in\pos{\Z}$ be positive semidefinite operators.
  Then
  \[
  \(\fid(P,Q)\)^2 = \inf_{Z\in\pd{\Z}} \ip{P}{Z} \sip{Q}{Z^{-1}}.
  \]
\end{theorem}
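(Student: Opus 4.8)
The plan is to obtain Alberti's Theorem by reading off the dual optimal value $\beta$ of the fidelity semidefinite program in two different ways. The statement concerns arbitrary $P,Q\in\pos{\Z}$, whereas the program was written in terms of purifications $u,v\in\Y\otimes\Z$ with $P=\tr_{\Y}(u u^{\ast})$ and $Q=\tr_{\Y}(v v^{\ast})$, so the first step is a reduction. I would fix an auxiliary space $\Y$ with $\dim(\Y)\geq\dim(\Z)$ and invoke the standard fact that any positive semidefinite operator of rank at most $\dim(\Y)$ admits such a purification; choosing $\Y$ this large lets me realize \emph{both} $P$ and $Q$ simultaneously as partial traces, placing arbitrary operators into the form the program requires.

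With $u$ and $v$ so chosen, strong duality for the program together with the corollary to Uhlmann's Theorem identifies the dual optimum with the primal optimum \eqref{eq:fidelity1}, giving $\beta=\fid(P,Q)^2$. Independently, the proposition just proved—combined with the observation that restricting $Z$ to $\pd{\Z}$ does not change the dual value—expresses the very same $\beta$ as the \emph{constrained} infimum
\[
\beta=\inf\left\{\ip{P}{Z}\,:\,Z\in\pd{\Z},\;\sip{Q}{Z^{-1}}\leq 1\right\}.
\]
It then remains only to match this constrained infimum with the \emph{unconstrained} product infimum $\inf_{Z\in\pd{\Z}}\ip{P}{Z}\sip{Q}{Z^{-1}}$ appearing in Alberti's Theorem, which I expect to be a short scaling argument rather than the crux. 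In one direction, every $Z$ with $\sip{Q}{Z^{-1}}\leq 1$ satisfies $\ip{P}{Z}\sip{Q}{Z^{-1}}\leq\ip{P}{Z}$ since $\ip{P}{Z}\geq 0$, so the product infimum is at most $\beta$. In the other, the product is invariant under $Z\mapsto sZ$ for $s>0$, so given any $Z\in\pd{\Z}$ with $\sip{Q}{Z^{-1}}>0$ I would rescale by $s=\sip{Q}{Z^{-1}}$ to get a feasible $Z'=sZ$ with $\sip{Q}{Z'^{-1}}=1$ and $\ip{P}{Z'}=\ip{P}{Z}\sip{Q}{Z^{-1}}$, showing $\beta$ is at most the product infimum; the degenerate case $Q=0$, where the fidelity and both infima vanish, I would dispatch directly.

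The main obstacle, I anticipate, is not the scaling step but the purification reduction: one must take care that $\Y$ is large enough for $P$ and $Q$ to be realized as partial traces at once, and that the fidelity produced by the program is genuinely $\fid(P,Q)$ for the original operators and not that of some enlarged system. Once the reduction is justified and the scale-invariance of the Alberti product is in hand, the chain $\fid(P,Q)^2=\beta=\inf_{Z\in\pd{\Z}}\ip{P}{Z}\sip{Q}{Z^{-1}}$ follows from strong duality with no further computation, completing the proof.
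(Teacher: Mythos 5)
Your proposal is correct and takes essentially the same route as the paper: identify the primal optimum of the fidelity semidefinite program with $\fid(P,Q)^2$ via Uhlmann's Theorem, rewrite the dual optimum via the proposition as the constrained infimum over $Z\in\pd{\Z}$ with $\sip{Q}{Z^{-1}}\leq 1$, and conclude by strong duality. The purification reduction and the rescaling argument ($Z\mapsto sZ$ with $s=\sip{Q}{Z^{-1}}$) that you make explicit are precisely the details the paper leaves implicit when it asserts that Uhlmann's and Alberti's Theorems ``each imply the other'' through this duality, and you handle both correctly.
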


\noindent
We have therefore established a simple and precise sense in which
Uhlmann's Theorem and Alberti's Theorem are dual statements in finite
dimensions, each implying the other.

\subsection*{Acknowledgments}

Many of the key ideas presented in this paper are contained in
the papers \cite{KitaevW00} and \cite{GutoskiW07}.
I thank Alexei Kitaev and Gus Gutoski for discussions during
and after the collaborations that produced these papers, which helped
to clarify these ideas.
I also thank Bill Rosgen and Mary Beth Ruskai for helpful comments.
This research was supported by Canada's NSERC and the Canadian
Institute for Advanced Research (CIFAR).

\bibliographystyle{alpha}

\newcommand{\etalchar}[1]{$^{#1}$}

\end{document}